\newacronym{cfg}{CFG}{Context-Free Grammar}
\newacronym{cdt}{CDT}{Conditioned Derivation Tree}
\newacronym{cdtts}{CDTTS}{CDT Transition System}
\newacronym{clp}{CLP}{Constraint Logic Programming}
\newacronym{csg}{CSG}{Context-Sensitive Grammar}
\newacronym{dnf}{DNF}{Disjunctive Normal Form}
\newacronym{hoag}{HOAG}{Higher-Order Attribute Grammar}
\newacronym{isla}{ISLa}{Input Specification Language}
\newacronym{nnf}{NNF}{Negation Normal Form}
\newacronym{pbt}{PBT}{Property-Based Testing}
\newacronym{rest}{reST}{reStructuredText}
\newcommand{\isla}{ISLa\xspace}
\newcommand{\coreisla}{Core-ISLa\xspace}
\newcommand{\islearn}{ISLearn\xspace}
\definecolor{lightgray}{gray}{0.95}
\definecolor{headergray}{gray}{0.8}
\crefname{rule}{Rule}{Rules}
\begin{document}
\title{Electronic Appendix to ``Input Invariants''}

\author{Dominic Steinh\"ofel}
\orcid{0000-0003-4439-7129}               
\affiliation{%
  \institution{CISPA Helmholtz Center for Information Security}
  \streetaddress{Stuhlsatzenhaus 5}
  \city{Saarbr\"ucken}
  \postcode{66123}
  \country{Germany}
}
\email{dominic.steinhoefel@cispa.de}

\author{Andreas Zeller}
\orcid{0000-0003-4719-8803}               
\affiliation{%
  \institution{CISPA Helmholtz Center for Information Security}
  \streetaddress{Stuhlsatzenhaus 5}
  \city{Saarbr\"ucken}
  \postcode{66123}
  \country{Germany}
}
\email{zeller@cispa.de}

\maketitle

In this electronic appendix to our paper ``Input
Invariants''~\cite{steinhoefel.zeller-22*1}, we provide additional examples,
formal definitions, theorems, and proof sketches. Furthermore, we show the
invariants that \islearn{} mined in our evaluation (RQ3). For more information
on the \isla{} language, we also refer to the \isla{} language
specification~\cite{steinhoefel-22}.

\setcounter{section}{\getrefnumber{main-sec:isla_by_example}}
\addtocounter{section}{-1}
\DeclareRobustCommand{\sectitle}{\nameref*{main-sec:isla_by_example}}
\section{\sectitle}

In \cref{main-sec:isla_by_example}, we discussed the semantic properties
\emph{def-use} and \emph{redefinition} along the XML language. Apart from those,
there are two other re-occurring \emph{generic} constraints we would like to
discuss: \emph{Length} properties and complex conditions for which we need
dedicated \emph{semantic predicates}.

One of the target languages in our evaluation (\cref{main-sec:evaluation}) is
\acf{rest}\vphantom{\gls{rest}}, a plaintext markup language used, e.g., by
Python's docutils.  In \gls{rest}, document
(sub)titles are underlined with ``\cd{=}'' or ``\cd{-}'' symbols. However,
titles are only valid if the length of the underline is not smaller than the
length of the title text. This property cannot be expressed in a \gls{cfg};
however, we can easily capture it in an \isla{} constraint:

\begin{lstlisting}[%
  language=isla,%
  basicstyle=\footnotesize\ttfamily,%
  % gobble=2,%
  % float=htb,%
  % caption={},%
  % label=lst:,%
  % numbers=left,%
]
str.len(<section-title>.<underline>) >=
  str.len(<section-title>.<title-text>)
\end{lstlisting}

The corresponding \coreisla{} constraint is

\begin{lstlisting}[%
  language=isla,%
  basicstyle=\footnotesize\ttfamily,%
  % gobble=2,%
  % float=htb,%
  % caption={},%
  % label=lst:,%
  % numbers=left,%
]
forall <section-title> title=
    "{<title-txt> titletxt}\n{<underline> underline}" in start:
  (>= (str.len underline) (str.len titletxt))
\end{lstlisting}

There are properties which cannot be expressed using structural predicates and
SMT-LIB formulas alone. A stereotypical case are checksums occurring in many
binary formats, such as in the TAR archive file format from our benchmark set.
To account for such situations, we can extend the \isla{} language with
additional atomic assertions, so-called \emph{semantic predicates}. In contrast
to structural predicates such as \cd{inside} or \cd{same\_position}, which we
have seen before, semantic predicates do not always evaluate to false for
invalid arguments. Instead, they can suggest a \emph{satisfying solution}. The
solver logic for individual semantic predicates is implemented in Python code in
our prototype. Once this logic has been implemented, we can pass such
predicates as additional signature elements to both the \isla{} evaluator or
solver and use them in constraints. The following constraint, which is part of
our constraint set for TAR files, uses a semantic predicate \cd{tar\_checksum}
computing a correct checksum value for the header of a TAR file.

\begin{lstlisting}[%
  language=isla,%
  basicstyle=\footnotesize\ttfamily,%
  % gobble=2,%
  % float=htb,%
  % caption={},%
  % label=lst:,%
  % numbers=left,%
]
tar_checksum(<header>, <header..<checksum>)
\end{lstlisting}

This corresponds to the \coreisla{} constraint

\begin{lstlisting}[%
  language=isla,%
  basicstyle=\footnotesize\ttfamily,%
  % gobble=2,%
  % float=htb,%
  % caption={},%
  % label=lst:,%
  % numbers=left,%
]
forall <header> header in start:
  forall <checksum> checksum in header:  
    tar_checksum(header, checksum)
\end{lstlisting}

Another use case for semantic predicates is when the SMT solver frequently times
out when looking for satisfying assignments. This happens in particular for
constraints involving a complex combination of arithmetic and string (e.g.,
regular expression) constraints. For example, valid CSV files have the property
that all rows have the same numbers of columns. Assuming that we know the number
of columns in the file header, we could create a regular expression matching all
CSV lines with the same number of columns. However, if we admit quoted
expressions and a wide character range for contained text, these regular
expressions get quite complex, and the problem exceeds the capabilities of
current SMT solvers in our experience. Thus, we implemented a new semantic
predicate \cd{count} which counts the number of occurrences of some nonterminal
in an input tree, and fixes trees with an insufficient number of occurrences if
possible.
The following \isla{} constraint for the CSV property uses an additional
language feature: It introduces a \emph{numeric constant} \cd{colno} using the
\plkeyw{num} directive, which works similarly to let expressions in functional
programming languages. It is primarily---and also in this example---used to
enable information exchange between semantic predicate formulas.

\begin{lstlisting}[%
  language=isla,%
  basicstyle=\footnotesize\ttfamily,%
  % gobble=2,%
  % float=tb,%
  % caption={},%
  % label=lst:,%
  % numbers=left,%
]
forall <csv-header> hline:
  exists int colno: (
    str.to_int(colno) >= 3 and 
    str.to_int(colno) @\cd{<}@= 5 and 
    count(hline, "<raw-field>", colno) and 
    forall <csv-record> line in start:
      count(line, "<raw-field>", colno))
\end{lstlisting}

One has to be aware that the \emph{order} of semantic predicates in a constraint
matters.  This is in contrast to all other language atoms: SMT formulas, in
particular, are fed to an SMT solver only after all universal quantifiers have
been eliminated resp.\ matched, and evaluated \emph{en bloc}. Semantic
predicates, on the other hand, are generally not compositional. When computing
the checksum for a TAR file, for instance, it is important that all elements of
the file header are already fixed at that point, i.e, all semantic predicates on
header elements have to be evaluated before.  Consequently, they have to occur
before the checksum predicate in the overall constraint. Despite this
particularity, semantic predicates are an easy way to increase both the
expressiveness and solving performance of \isla{} constraints, and to overcome
the limits of SMT-LIB and off-the-shelf solvers.

\setcounter{section}{\getrefnumber{main-sec:syntax_and_semantics}}
\addtocounter{section}{-1}
\DeclareRobustCommand{\sectitle}{\nameref*{main-sec:syntax_and_semantics}}
\section{\sectitle}

We provide a more formal definition of derivation trees. We use the symbols \(<\) and
\(\leq\) to denote the strict and non-strict versions of the same partial order
relation, respectively; for the corresponding covering relation which only holds
between parents and their \emph{immediate} children, we write \(\prec\).

\begin{definition}[Derivation Tree]
  A \emph{derivation tree} for a \gls{cfg} \(G=(N,T,P,S)\) is a \emph{rooted
  ordered tree} \(t=(X,\leq_V,\leq_S)\) such that
  \begin{enumerate*}
    \item the vertices \(v\in{{X}}\) are \emph{labeled} with symbols
      \(\nlabel{v}\in{}N\cup{}T\),
    \item the \emph{vertical} order \(\leq_V\,\subseteq{}X\times{}X\) indicates
      the parent-child relation such that the partial order \((X,\leq_V)\) forms
      an unordered tree,
    \item the \emph{sibling} order \(\leq_S\,\subseteq{}X\times{}X\) yields a
      partial order \((X,\leq_S)\) such that two distinct nodes \(v_1\), \(v_2\)
      are comparable by relation \(\prec_S\) if, and only if, they are siblings,
    \item the root of \(t\) is labeled with \(S\), and\label{item:dt_root}
    \item each inner node \(v\) is labeled by a symbol in \(n\in{}N\) and, if
      \(v_1,\dots,v_k\) is the ordered list of all immediate children of \(v\),
      i.e., all distinct nodes such that \(v\prec_V{}v_i\) and \(v_i<_S{}v_l\)
      for \(1\leq{}i<l\leq{}k\), there is a production
      \((n,s_1,\dots,s_k)\in{}P\) such that \(\nlabel{v}=n\) and, for all
      \(v_i\), \(\nlabel{v_i}=s_i\).\label{item:dt_deriv}
  \end{enumerate*}
  We write \(\leaves{t}\) for the set of leaves of \(t\), and \(\nlabel{t}\) for
  the label of its root.  A derivation tree is \emph{closed} if \(l\in{}T\) for
  all \(l\in\leaves{t}\), and \emph{open} otherwise. We define 
  \(\tclosed{t}\coloneqq\forall{}l\in\leaves{t}:l\in{}T\), and
  \(\topen{t}\coloneqq\neg\tclosed{t}\).
  \(\trees{G}\) is the set of all (closed and open) derivation trees for~\(G\).
  \label{def:derivation_tree_formal}
\end{definition}

\begin{example}%
  \label{ex:xml_derivation_tree_formal}
  We explain the formal definition of derivation trees
  (\cref{def:derivation_tree_formal}) along the XML document 
  \lstinline[language=xml]!<a>x</a>!.
  visualized in Figure~\ref{main-fig:example_xml_derivation_tree} in our paper~\cite{steinhoefel.zeller-22*1}.
  Formally, this tree is represented as a triple \(t=(X,\leq_V,\leq_S)\), where
  \(X=\{v_1,\dots,v_{14}\}\), with \(\nlabel{v_1}=\bnfn{xml\text{-}tree}\),
  \(\nlabel{v_2}=\bnfn{xml\text{-}open\text{-}tag}\), etc. The \emph{vertical
  order} \(\leq_V\) contains the edges in the figure: For example,
  \(v_1\leq_V{}v_2\) and \(v_2\leq_V{}v_{13}\). This relation alone only gives
  us an unordered tree: When ``unparsing'' the tree, we could thus obtain the
  undesired result
  \lstinline[language=xml]!xa><></a!.
  Thus, we define a \emph{sibling order} \(\leq_S\) to order the immediate
  children of the same node. For instance, we have \(v_6\leq_S{}v_8\) (and
  \(v_6\leq_S{}v_6\), \(v_6<_S{}v_8\), and \(v_6\prec_S{}v_7\)), but not
  \(v_6\leq_S{}v_9\) (since they have different parents) and \(v_6\prec_S{}v_8\)
  (since they are not \emph{immediate} siblings).
  The tree is not only any ordered tree, but a \emph{derivation tree} for the
  XML grammar in Figure~\ref{main-fig:xml-bnf} in \cite{steinhoefel.zeller-22*1}, since it satisfies
  4 and 5 of Definition 3.4 in \cite{steinhoefel.zeller-22*1}. The root of
  the tree, \(v_1\), is labeled with the grammar's start symbol
  \(\bnfn{xml\text{-}tree}\) (\cref{item:dt_root}). The tree relations conform to
  the possible grammar derivations: Consider, e.g., node \(v_2\) and its
  immediate children \(v_6\), \(v_7\), and \(v_8\). According to
  \cref{item:dt_deriv}, there has to be a production
  \((\bnfn{xml\text{-}open\text{-}tag},\bnft{<},\bnfn{id},\bnft{>})\) in the
  grammar, which is indeed the case, since 
  \(\bnft{<} \bnfs \bnfn{id} \bnfs \bnft{>}\) 
  is an expansion alternative (the first one) for the nonterminal
  \(\bnfn{xml\text{-}open\text{-}tag}\). The leaf set \(\leaves{t}\) is
  \(\{v_6,v_{13},v_8,v_9,v_{10},v_{14},v_{12}\}\). The tree \(t\) is
  \emph{closed}, since all leaves are labeled with \emph{terminal} symbols. It
  would be \emph{open} if we removed the subtree rooted in any tree node (but
  the root).
\end{example}

\paragraph{Standard \isla Predicates}%
\label{par:standard_isla_predicates}

\isla{} offers a catalog of default supported predicates.
\Cref{tab:standard_isla_predicates} provides an overview of those. Structural
predicates can be re-used for many different languages, while semantic
predicates are mostly application-specific. For this reason, there is only one
semantic predicate included in \isla{} per default, which is the \cd{count}
predicate used in the formalization of CSV.

\begin{table*}[tb]
  \centering
  \caption{Standard \isla{} predicates. The predicate \cd{count} is a semantic predicate; all
  other predicates are structural predicates.}
  \label{tab:standard_isla_predicates}
  \begin{tabularx}{\linewidth}{lX}
   \toprule
   Predicate & Explanation \\\midrule
   \cd{after($\mathit{node}_1$, $\mathit{node}_2$)} & \cd{$\mathit{node}_1$} occurs after \cd{$\mathit{node}_2$} (not below) in the parse tree. \\
   \cd{before($\mathit{node}_1$, $\mathit{node}_2$)} & \cd{$\mathit{node}_1$} occurs before \cd{$\mathit{node}_2$} (not below) in the parse tree. \\
   \cd{consecutive($\mathit{node}_1$, $\mathit{node}_2$)} & \cd{$\mathit{node}_1$} and \cd{$\mathit{node}_2$} are consecutive leaves in the parse tree. \\
   \cd{different\_position($\mathit{node}_1$, $\mathit{node}_2$)} & \cd{$\mathit{node}_1$} and \cd{$\mathit{node}_2$} occur at different positions (cannot be the same node). \\
   \cd{direct\_child($\mathit{node}_1$, $\mathit{node}_2$)} & \cd{$\mathit{node}_1$} is a direct child of \cd{$\mathit{node}_2$} in the derivation tree. \\
   \cd{inside($\mathit{node}_1$, $\mathit{node}_2$)} & \cd{$\mathit{node}_1$} is a subtree of \cd{$\mathit{node}_2$.} \\
   \cd{level(PRED, NONTERMINAL, $\mathit{node}_1$, $\mathit{node}_2$)} & \cd{$\mathit{node}_1$} and \cd{$\mathit{node}_2$} are related relatively to each other as specified by \cd{PRED} and \cd{NONTERMINAL} (see below). \cd{PRED} and \cd{NONTERMINAL} are strings. \\
   \cd{nth(N, $\mathit{node}_1$, $\mathit{node}_2$)} & \cd{$\mathit{node}_1$} is the N-th occurrence of a node with its nonterminal symbol within \cd{$\mathit{node}_2$.} N is a numeric String. \\
   \cd{same\_position($\mathit{node}_1$, $\mathit{node}_2$)} &
   \cd{$\mathit{node}_1$} and \cd{$\mathit{node}_2$} occur at the same position (have to be the same node). \\\midrule
   \cd{count(in\_tree, NEEDLE, NUM)} & There are \cd{NUM} occurrences of the \cd{NEEDLE} nonterminal in \cd{in\_tree}. \cd{NEEDLE} is a string, \cd{NUM} a numeric string or int variable. \\\bottomrule
  \end{tabularx}
\end{table*}

\paragraph{Matching Match Expressions}%
\label{par:matching_match_expressions}

Match expressions are matched against derivation trees by first parsing them
into \emph{abstract} parse trees (with open leaves), and then matching these
parse trees against the derivation tree in question. This process is also
described in detail in the \isla{} language specification~\cite{steinhoefel-22}.

We use a function \(\mathit{mexprTrees}(T,\mathit{mexpr})\) that takes a
nonterminal \(T\) and a match expression \(\mathit{mexpr}\) and returns a set of
derivation trees. If the match expression contains \emph{optional} elements, it
is ``flattened'' first. That is, we compute all combinations of activated and
non-activated optional expressions. If there are \(n\) optionals in the match
expression, we obtain \(2^n\) flattened match expressions. Then, we parse the
flattened expressions using an augmented version of the reference grammar. The
augmented grammar adds expansions ``\cd{<A> ::= '<' 'A' '>'}'' for each
nonterminal \cd{A}, and similarly extends the grammar with expansions for
variable binders ``\cd{\{<T> var\}}.'' Due to ambiguities in the grammar, we
might obtain multiple parse trees even for flattened expressions;
\(\mathit{mexprTrees}\) returns all of them, along with a mapping of bound
variables to the positions of their matches in the respective derivation trees.
After parsing the match expression, the function \(\mathit{matchTrees}(t,t',P)\)
matches a derivation tree \(t\) against a result \((t',P)\) from
\(\mathit{mexprTrees}\), where \(t'\) is a parse tree and \(P\) a mapping from
bound variables to positions in \(t'\). \Cref{fig:defn_match_trees} shows the
definition of \(\mathit{mexprTrees}\). In the definition,

\begin{itemize}
  \item \(l(t)\) is the label of the tree \(t\);
  \item all alternatives in the definition are *mutually exclusive* (the first
    applicable one is applied);
  \item by \(\mathit{numc}(t)\) we denote the number of  children of the
    derivation tree \(t\);
  \item by \(\mathit{child}(t, i)\) we denote the \(i\)-th child of t, starting
    with 1; 
  \item \(P_i\) is computed from a mapping \(P\) by discarding all paths in
    \(P\) not starting with \(i\) and taking the \emph{tail} (discarding the
    first element) for all other paths; and
  \item we use standard set union notation \(\bigcup_{i=1}^n\beta_i\) for
    combining variable assignments \(\beta_i\).
\end{itemize}

Let \(T\) be the label of the root of tree \(t\). We define
\begin{multline*}
  \mathit{match}(t,\mathit{mexpr})\coloneqq\\
  \left(\bigcup_{(t',P)\in\mathit{mexprTrees}(T,\mathit{mexpr})}
  \{\mathit{matchTrees}(t,t',P)\}\right)
  \setminus\{\bot\}
\end{multline*}

\begin{figure*}[tb]
  \centering
  \(
  \displaystyle
  \mathit{matchTrees}(t, t', P) :=
  \begin{cases}
  \bot                                                                                  & \text{if }l(t)\neq{}l(t')\vee(\mathit{numc}(t')>0\wedge \\
                                                                                        & \qquad\mathit{numc}(t)\neq\mathit{numc}(t')) \\
  [v\mapsto{}t]                                                                         & \text{if }P=[v\mapsto{}()]\text{ for some }v \\
  \bot                                                                                  & \text{if }\mathit{matchTrees}(\mathit{child}(t, i), \mathit{child}(t', i), P_i)=\bot \\
                                                                                        & \qquad\text{for any }i\in[1,\dots,\mathit{numc}(t)] \\
  \bigcup_{i=1}^{\mathit{numc}(t)}\Big( & \\
  \quad\mathit{matchTrees}\big(\mathit{child}(t, i),                  & \\
  \quad\hphantom{\mathit{match}\big(}\mathit{child}(t', i), P_i\big)\Big) & \text{otherwise} \\
  \end{cases}
  \)
  \caption{Recursive Definition of \(matchTrees\).}%
  \label{fig:defn_match_trees}
\end{figure*}

\setcounter{section}{\getrefnumber{main-sec:input_generation}}
\addtocounter{section}{-1}
\DeclareRobustCommand{\sectitle}{\nameref*{main-sec:input_generation}}
\section{\sectitle}

We provide a formalization of our \isla{} constraint solver, including two
correctness theorems and proof sketches.

We formalize input generation for \isla{} as a transition system between
\emph{\glspl{cdt}} \(\condtree{\Phi}{t}\), where \(\Phi\) is a set of \isla{}
formulas (interpreted as a conjunction) and \(t\) a (possibly open) derivation
tree. Intuitively, \(\Land\Phi\) constrains the inputs represented by \(t\), similarly as \(\forsem{\phi}\) constrains the language of the grammar. To
make this possible, we need to relax the definition of \isla{} formulas: Instead
of free variables, formulas may contain references to tree nodes which they are
concerned about. To that end, tree nodes are assigned unique, numeric
identifiers, which may occur everywhere in \isla{} formulas where a free
variable might occur (variables bound by quantifiers may not be replaced with
tree identifiers).

Consider, for example, the \isla{} constraint
\[\phi=\islaforall[\bnfn{id}]{\mathit{id}}{\mathit{start}}{\cd{(= (str.len $\mathit{id}$) 17)}}\]
constraining the XML grammar in Figure~\ref{main-fig:xml-bnf} in \cite{steinhoefel.zeller-22*1} to identifiers of length 17,
where \(\mathit{id}\) is a bound variable of type \(\bnfn{ID}\) and
\(\mathit{start}\) is a free variable of type \(\bnfn{start}\). Let \(t\) be a
tree consisting of a single (root) node with identifier 1, and labeled with
\(\bnfn{start}\). Then, \(\forsem{\phi}\) is identical to the strings
represented by the \gls{cdt}
\[\condtree{\{\islaforall[\bnfn{id}]{\mathit{id}}{\mathit{1}}{\cd{(=~(str.len~$\mathit{id}$)~17)}}\}}{t}.\]
Our \gls{cdt} transition system relates an input \gls{cdt} to a set of output
\glspl{cdt}. We define two properties of such transitions: A transition is
\emph{precise} if the input represents \emph{at most} the set of all strings
represented by all outputs together; conversely, it is \emph{complete} if the
input represents \emph{at least} the set of all strings represented by all
outputs. Precision is mandatory for the \isla{} producer, since we have to avoid
generating system inputs which do not satisfy the specified constraints. 

To define the semantics of \glspl{cdt}, we first define the closed trees
represented by (the language of) \emph{open} derivation trees. We need the
concept of a \emph{tree substitution}: The tree \(t\update{v}{t'}\) results from
\(t=(X,\leq_V,\leq_S)\) by replacing the subtree rooted in node \(v\in{}X\) by
\(t'\), updating \(X\), \(\leq_V\) and \(\leq_S\) accordingly.

\begin{definition}[Semantics of Open Derivation Trees]
  Let \(t\in\trees{G}\) be a derivation tree for a grammar \(G=(N,T,P,S)\). We
  define the set \(\trees{t}\subseteq\trees{G}\) of closed derivation trees
  represented by \(t\) as
  \[
    \begin{alignedat}{3}
      \trees{t}      & \coloneqq\, &  & \big\{t\update{l_1}{t_1}\cdots\update{l_k}{t_k}\;\vert \\
                     &             &  & \quad{}\phantom{{\land\;}}l_i\in\leaves{t}\land{}k=\card{\leaves{t}} \\
                     &             &  & \quad{}{\land\;}(\forall{}j,m\in1\dots{}k:l\neq{}m\limp{}l_j\neq{}l_m) \\
                     &             &  & \quad{}{\land\;}n_i=\nlabel{t_i}=\nlabel{l_i} \\
                     &             &  & \quad{}{\land\;}G_{n_i}=(N,T,P,n_i)\land{}t_i\in\trees{G_{n_i}} \big\}
    \end{alignedat}
  \]
  \label{def:semantics_of_open_derivation_trees}
\end{definition}

Observe that for the tree consisting of a single node labeled with the start
symbol \(S\), \(\trees{t}\) is identical to \(\trees{S}\). Furthermore, for any
\emph{closed} tree \(t'\), it holds that \(\trees{t'}=\{t'\}\).

We re-use the validity judgment defined from Definition~\ref{main-def:isla_validation} in our paper
\cite{steinhoefel.zeller-22*1} for the semantics definition for \glspl{cdt} by
interpreting tree identifiers in formulas similarly to variables. Furthermore,
the special variable assignment \(\varassgn_t\) for the derivation tree \(t\)
associates with each tree identifier in \(t\) the subtree rooted in the node
with that identifier.  Then, the definition is a straightforward specialization
of Definition~3.8 from \cite{steinhoefel.zeller-22*1}:

\begin{definition}[Semantics of \glspl{cdt}]
  Let \(\Phi\subseteq\ps{\fml}\) be a set of \isla{} formulas for the signature
  \(\sig=(G,\psym,\vsym)\), \(t\in\trees{G}\) be a derivation tree, and
  \(\pinterpret\), \(\sinterpret\) be interpretations for predicates and SMT
  formulas.
  We define the semantics \(\cdtsem{\condtree{\Phi}{t}}\) of the \gls{cdt}
  \(\condtree{\Phi}{t}\) as
  \[
    \cdtsem{\condtree{\Phi}{t}}\coloneqq\{\treetostr{t'}\;\vert\;t'\in\trees{t}\land
      \tclosed{t'} 
    \land\bislam{\varassgn_{t'}}{\Land\Phi}\}.
  \]
  \label{def:semantics_of_cdts}
\end{definition}

A \acf{cdtts}\vphantom{\gls{cdtts}} is simply a transition system between
\glspl{cdt}.

\begin{definition}[\gls{cdt} Transition System]
  A \emph{\gls{cdtts}} for a signature \(\sig=(G,\psym,\vsym)\) is a transition
  system \mbox{\((C,\transrel)\)}, where, for \(\Phi\in\ps{\fml}\) and
  \(t\in\trees{G}\), \(C\) consists of \glspl{cdt} \(\condtree{\Phi}{t}\), and
  \(\transrel\,\subseteq{}C\times{}C\). We write
  \(\mathit{cdt}\transrel\mathit{cdt'}\) if
  \((\mathit{cdt},\mathit{cdt'})\in'\,\transrel\).
  \label{def:cdt_transition_system}
\end{definition}



Intuitively, one applies \gls{cdtts} transitions to an initial constraint with a
trivial tree only consisting of a root node labeled with the start nonterminal,
and collects ``output'' \glspl{cdt} \(\condtree{\emptyset}{t}\) with an empty
constraint. The trees \(t\) of such outputs are solutions to the initial
problem. We call a \gls{cdtts} \emph{globally precise} if all such trees \(t\)
are actual solutions, i.e., the system does not produce wrong outputs; we call
it \emph{globally complete} if the entirety of trees \(t\) from result
\glspl{cdt} represents the full semantics of the input \gls{cdt}.

\begin{definition}[Global Precision and Completeness]
  Let \((C,\transrel)\) be a \gls{cdtts}, and \(R_\mathit{cdt}\) be the set of
  all closed trees \(t\) such  that
  \(\mathit{cdt}\transrel\cdots\transrel\condtree{\emptyset}{t}\) is a
  derivation in \((C,\transrel)\).  Then, \((C,\transrel)\) is \emph{globally
  precise} if, for each \gls{cdt} \(\mathit{cdt}\) in the domain of
  \(\transrel\), it holds that
  \(\cdtsem{\mathit{cdt}}\supseteq\{\treetostr{t}\vert{}t\in{}R_\mathit{cdt}\}\).
  The \gls{cdtts} is \emph{globally complete} if it holds that
  \(\cdtsem{\mathit{cdt}}\subseteq\{\treetostr{t}\vert{}t\in{}R_\mathit{cdt}\}\).
  \label{def:global_precision_completeness_of_cdtts}
\end{definition}

To enable transition-local reasoning about precision and completeness, we define
notions of local precision and completeness.
Local precision is the property that \emph{at each transition step}, no
``wrong'' inputs are added, and local completeness the property that no
transition step loses information.

\begin{definition}[Local Precision and Completeness]
  A \gls{cdtts} \((C,\transrel)\) is \emph{precise} if, for each \gls{cdt}
  \(\mathit{cdt}\) in the domain of \(\transrel\), it holds that
  \(\cdtsem{\mathit{cdt}}\supseteq\bigcup_{\mathit{cdt}\transrel\mathit{cdt'}}(\cdtsem{\mathit{cdt'}})\).
  The \gls{cdtts} is \emph{complete} if it holds that
  \(\cdtsem{\mathit{cdt}}\subseteq\bigcup_{\mathit{cdt}\transrel\mathit{cdt'}}(\cdtsem{\mathit{cdt'}})\).
  \label{def:precision_completeness_of_cdtts}
\end{definition}

As for ``soundness'' in first-order logic (see, e.g.,~\cite{dalen-94}), local
precision implies global precision, i.e., it suffices to show that the
individual transitions are precise to obtain the property for the whole system.
This is demonstrated by the following \cref{lem:local_implies_global_precision}.
Note that the opposite direction does not hold, since a \gls{cdtts} could in
theory lose precision locally and recover it globally, although it is unclear
how (and why) such a system should be designed.

\begin{lemma}
  A locally precise \gls{cdtts} is also globally precise.
  \label{lem:local_implies_global_precision}
\end{lemma}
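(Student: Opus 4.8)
The plan is to follow the standard pattern by which local soundness lifts to global soundness in a proof system: I will show that the inclusion \(\cdtsem{\mathit{cdt}}\supseteq\cdtsem{\mathit{cdt'}}\) is preserved along \emph{arbitrary finite} derivations, not just single steps, and then instantiate this at the terminal \glspl{cdt} \(\condtree{\emptyset}{t}\) that generate \(R_\mathit{cdt}\). Concretely, I first establish the auxiliary claim that whenever \(\mathit{cdt}_0\transrel\mathit{cdt}_1\transrel\cdots\transrel\mathit{cdt}_n\) is a derivation in \((C,\transrel)\), then \(\cdtsem{\mathit{cdt}_0}\supseteq\cdtsem{\mathit{cdt}_n}\). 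This is the transitive closure of the local-precision inequality, and it is precisely what makes transition-local reasoning sufficient.

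I would prove the auxiliary claim by induction on the length \(n\) of the derivation. For \(n=0\) there is nothing to show, as \(\mathit{cdt}_0=\mathit{cdt}_n\) and \(\supseteq\) is reflexive. For the step, consider a derivation \(\mathit{cdt}_0\transrel\mathit{cdt}_1\transrel\cdots\transrel\mathit{cdt}_{n+1}\). Since \(\mathit{cdt}_0\transrel\mathit{cdt}_1\) holds, \(\mathit{cdt}_0\) lies in the domain of \(\transrel\), so local precision (\cref{def:precision_completeness_of_cdtts}) applies and yields \(\cdtsem{\mathit{cdt}_0}\supseteq\bigcup_{\mathit{cdt}_0\transrel\mathit{cdt'}}\cdtsem{\mathit{cdt'}}\supseteq\cdtsem{\mathit{cdt}_1}\), where the last inclusion holds because \(\mathit{cdt}_1\) is one of the transition targets of \(\mathit{cdt}_0\). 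The induction hypothesis, applied to the length-\(n\) suffix \(\mathit{cdt}_1\transrel\cdots\transrel\mathit{cdt}_{n+1}\), gives \(\cdtsem{\mathit{cdt}_1}\supseteq\cdtsem{\mathit{cdt}_{n+1}}\). Transitivity of \(\supseteq\) then closes the step.

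It remains to connect the auxiliary claim to \cref{def:global_precision_completeness_of_cdtts}. Fix a \gls{cdt} \(\mathit{cdt}\) in the domain of \(\transrel\) and any \(t\in R_\mathit{cdt}\); by definition there is a derivation \(\mathit{cdt}\transrel\cdots\transrel\condtree{\emptyset}{t}\), so the auxiliary claim gives \(\cdtsem{\mathit{cdt}}\supseteq\cdtsem{\condtree{\emptyset}{t}}\). Here \(t\) is closed, so \(\trees{t}=\{t\}\), and \(\Land\emptyset\) is the empty conjunction, which is vacuously valid under \(\varassgn_t\); hence by \cref{def:semantics_of_cdts} we have \(\cdtsem{\condtree{\emptyset}{t}}=\{\treetostr{t}\}\). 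Taking the union over all \(t\in R_\mathit{cdt}\) yields \(\cdtsem{\mathit{cdt}}\supseteq\bigcup_{t\in R_\mathit{cdt}}\{\treetostr{t}\}=\{\treetostr{t}\mid{}t\in R_\mathit{cdt}\}\), which is exactly global precision.

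I do not expect a genuine obstacle here; the argument is essentially bookkeeping around an induction on derivation length. The two points that require a little care are (i) ensuring the domain side-condition of local precision is met at every step, which is automatic since every non-final \gls{cdt} in a derivation has an outgoing edge, and (ii) the base-case identity \(\cdtsem{\condtree{\emptyset}{t}}=\{\treetostr{t}\}\) for closed \(t\), which relies on the observations recorded immediately after \cref{def:semantics_of_open_derivation_trees}.
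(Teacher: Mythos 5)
Your proof is correct and takes essentially the same route as the paper's: both apply local precision stepwise along an arbitrary derivation chain (yours via an explicit induction on its length, the paper's via transitivity of \(\supseteq\)) and then conclude with the observation that a terminal CDT with empty constraint set and closed tree \(t\) has semantics \(\{\treetostr{t}\}\). Your version merely spells out the bookkeeping (the domain side-condition and the base-case identity) in more detail; there is no substantive divergence.
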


\begin{proof}
  The lemma trivially holds if \(R_\mathit{cdt}=\emptyset\). Otherwise, let
  \(\mathit{cdt}_0\transrel\mathit{cdt_1}\transrel\cdots\transrel\condtree{\emptyset}{t}\)
  be any transition chain s.t.~\(\mathit{cdt}_0=\mathit{cdt}\) and
  \(t\in{}R_\mathit{cdt}\).
  Then, it follows from local precision that
  \(\cdtsem{\mathit{cdt}_k}\supseteq\cdtsem{\mathit{cdt}_{k+1}}\) for
  \(k=0,\dots,n-1\), and by transitivity of \(\supseteq\) also
  \(\cdtsem{\mathit{cdt}_k}\supseteq\cdtsem{\mathit{cdt}_{l}}\) for
  \(0\leq{}k<l\leq{}n\). Since
  \(\cdtsem{\condtree{\emptyset}{t'}}=\treetostr{t'}\) for closed \(t'\), the
  lemma follows.
\end{proof}

Global completeness cannot easily be reduced to local completeness: It includes
the ``termination'' property that all derivations end in \glspl{cdt} with empty
constraint set; furthermore, one has to show that there is an applicable
transition for each \gls{cdt} with non-empty semantic. 


Our \isla{} solver prototype implements the \gls{cdtts} in
\cref{fig:isla_cdtts}. It solves SMT and semantic predicate constraints by
querying the SMT solver or the predicate oracle, and eliminates existential
constraints by inserting new subtrees into the current conditioned tree. Only
when the complete constraint has been eliminated, we finish off the remaining
incomplete tree by replacing open leaves with suitable concrete subtrees.
This is in principle a complete procedure; yet, our implementation only
considers a finite subset of all solutions in solver queries and when performing
tree insertion. Consequently, it usually misses some solutions, but outputs
\emph{more diverse results more quickly} compared, e.g., to a naive search-based
approach filtering out wrong solutions.

\newcommand{\id}{\mathit{id}}
\setcounter{equation}{0}
\begin{figure*}[tb]
  \begin{multicols}{2}
    \begin{alignat}{2}
      & {\condtreeidx{\{\dots,\phi,\dots\}}{I}{t} \; \transrel \; \{}\condtreeidx{\{\dots,\phi'\dots\}}{I}{t}\;\vert\; \label[rule]{rule:inv}\\
      & \hphantom{\condtreeidx{\{\dots,\phi,\dots\}}{I}{t} \; \transrel \; \{}\quad\phi'\in\establishinv{\phi}\land\phi\neq\phi'\}\neq\emptyset \nonumber\\
      & \condtreeidx{\Phi}{I}{t} \; \transrel \; \{\condtreeidx{\Phi\setminus\phi}{I}{t}\;\vert\;\phi\in\Phi\cap\psymstruct\land\;\pinterpret(\phi)=\semtrue\} \label[rule]{rule:elim_struct_true}\\
      & \condtreeidx{\{\dots,\islaforall[\numtype]{n}{\phi},\dots\}}{I}{t} \; \label[rule]{rule:elim_num_intro}\transrel\\
      & \quad\quad \condtreeidx{\{\dots,\singlesubst{n}{c}(\phi),\dots\}}{I}{t} \nonumber\\
      & \quad\text{where }c\in\vsym\text{ is \emph{fresh} and }\vartype{c}=\numtype \nonumber\\
      & \condtreeidx{\{\dots, \overbrace{\islaforall[\mathit{type}]{v}{\id}{\phi}}^\psi, \dots\}}{I}{t} \;\transrel\; \{\condtreeidx{\{\dots,\, \dots\}}{I}{t}\} \label[rule]{rule:elim_forall}\\
      & \quad\text{if }\forall\text{ subrees }t'\text{ of }\trees{\varassgn_t(\id)}: \nonumber\\
      & \quad\quad(\psi,t')\in{}I\,\lor\nlabel{t'}\neq\mathit{type} \nonumber\\
      & \condtreeidx{\{\dots, \overbrace{\islaforallb[\mathit{type}]{v}{\mathit{mexpr}}{\id}{\phi}}^\psi, \dots\}}{I}{t} \;\transrel \label[rule]{rule:elim_forall_mexpr}\\
      & \quad\{\condtreeidx{\{\dots,\, \dots\}}{I}{t}\}\text{ if } \forall\text{ subrees }t'\text{ of }\trees{\varassgn_t(\id)}: \nonumber\\
      & \quad\quad\quad(\psi,t')\in{}I\,\lor\nlabel{t'}\neq\mathit{type}\,\lor \nonumber\\
      & \quad\quad\quad\text{there is no }m=\matchme{t'}{\mathit{mexpr}} \nonumber\\
      & \condtreeidx{\{\dots, \overbrace{\islaforall[\mathit{type}]{v}{\id}{\phi}}^{\psi}, \dots\}}{I}{t} \;\transrel \label[rule]{rule:match_forall}\\
      & \quad\big\{\condtreeidx{\{\dots,\psi,\dots\}\cup\bigcup\Psi}{I\,\cup\,(\{\psi\}\times{}T)}{t}\big\}~\mathrm{where} \nonumber\\
      & \quad\quad\Psi=\big\{\updatesubst{\beta_t}{v}{t'}(\phi) \;\vert\;t'\in{}T\big\}\,\land \nonumber\\
      & \quad\quad{}T=\big\{t'\;\vert\;t'=\varassgn_t(\id)\land(\psi,t')\notin{}I\,\land \nonumber\\
      & \quad\quad{}\hphantom{T=\big\{t'\;\vert\;}\nlabel{t'}=\mathit{type}\big\}\neq\emptyset \nonumber \\
      & \condtreeidx{\{\dots, \overbrace{\islaforallb[\mathit{type}]{v}{\mathit{mexpr}}{\id}{\phi}}^{\psi}, \dots\}}{I}{t} \;\transrel \label[rule]{rule:match_forall_mexpr}\\
      & \quad\big\{\condtreeidx{\{\dots,\psi,\dots\}\cup\bigcup\Psi}{I\,\cup\,(\{\psi\}\times{}T)}{t}\big\}~\mathrm{where} \nonumber\\
      & \quad\quad\Psi=\big\{\updatesubstm{\updatesubst{\beta_t}{v}{t'}}{m}(\phi) \;\vert\;(t',m)\in{}T\big\}\,\land \nonumber\\
      & \quad\quad{}T=\big\{(t',m)\;\vert\;t'=\varassgn_t(\id)\land(\psi,t')\notin{}I\,\land \nonumber\\
      & \quad\quad\quad{}\nlabel{t'}=\mathit{type}\;\land \nonumber\\
      & \quad\quad\quad\text{there is an }m=\matchme{t}{\mathit{mexpr}}\big\}\neq\emptyset \nonumber 
    \end{alignat}
  
    \vfill
    \columnbreak

    \begin{alignat}{2}
      & \condtreeidx{\Phi}{I}{t} \; \transrel \; \{\condtreeidx{\Phi}{I}{t'}\;\vert\;t'\in\expand{\Phi}{t}\neq\emptyset\} \label[rule]{rule:expand}\\
      & {\condtreeidx{\Phi}{I}{t} \; \transrel \; \{}\condtreeidx{\Phi\setminus\phi}{I}{\varassgn(t)}\;\vert\; \label[rule]{rule:elim_smt_sat}\\
      & \quad\quad\phi\in\Phi\cap\boolterms{\vsym}\land\varassgn\in\sinterpret(\phi)\neq\semfalse\} \nonumber\\
      & {\condtreeidx{\Phi}{I}{t} \; \transrel \; \{}\condtreeidx{\Phi\setminus\phi}{I}{\varassgn(t)}\;\vert\; \label[rule]{rule:elim_sempred_sat}\\
      & \quad\quad\phi\text{ is \emph{first} }\phi\in\Phi\cap\psymsem\land\varassgn\in\pinterpret(\phi)\notin\{\semfalse,\notready\}\} \nonumber\\
      & \condtreeidx{\{\dots, \overbrace{\islaexists[\mathit{type}]{v}{\id}{\phi}}^{\psi}, \dots\}}{I}{t} \;\transrel \label[rule]{rule:match_exists}\\
      & \quad\bigcup_{\xi\in\Psi}\big\{\condtreeidx{\{\dots,\xi,\dots\}}{I}{t}\big\}~\mathrm{where} \nonumber\\
      & \quad\quad\Psi=\big\{{\updatesubst{\beta_t}{v}{t'}}(\phi) \;\vert\;t'\in{}T\big\}\,\land \nonumber\\
      & \quad\quad{}T=\big\{t'\;\vert\;t'=\varassgn_t(\id)\land(\psi,t')\notin{}I\,\land \nonumber\\
      & \quad\quad\quad{}\nlabel{t'}=\mathit{type}\,\big\}\neq\emptyset \nonumber\\
      & \condtreeidx{\{\dots, \overbrace{\islaexistsb[\mathit{type}]{v}{\mathit{mexpr}}{\id}{\phi}}^{\psi}, \dots\}}{I}{t} \;\transrel \label[rule]{rule:match_exists_mexpr}\\
      & \quad\bigcup_{\xi\in\Psi}\big\{\condtreeidx{\{\dots,\xi,\dots\}}{I}{t}\big\}~\mathrm{where} \nonumber\\
      & \quad\quad\Psi=\big\{\updatesubstm{\updatesubst{\beta_t}{v}{t'}}{m}(\phi) \;\vert\;(t',m)\in{}T\big\}\,\land \nonumber\\
      & \quad\quad{}T=\big\{(t',m)\;\vert\;t'=\varassgn_t(\id)\land(\psi,t')\notin{}I\,\land \nonumber\\
      & \quad\quad\quad{}\nlabel{t'}=\mathit{type}\;\land \nonumber\\
      & \quad\quad\quad\text{there is an }m=\matchme{t}{\mathit{mexpr}}\big\}\neq\emptyset \nonumber\\
      & \condtreeidx{\{\dots, \islaexists[\mathit{type}]{v}{\id}{\phi}, \dots\}}{I}{t} \;\transrel \label[rule]{rule:elim_exists}\\
      & \quad\big\{\condtreeidx{\{\dots,\singlesubst{v}{\mathit{nid}}(\phi),\dots\}\cup\Phi_\mathit{orig}}{I}{\singlesubst{\id}{t'}(t)}\;\vert \nonumber\\
      & \quad\quad(\mathit{nid},t')\in\trinsert{\maketree{v}}{\varassgn_t(\id)} \nonumber\\
      & \condtreeidx{\{\dots, \islaexistsb[\mathit{type}]{v}{\mathit{mexpr}}{\id}{\phi}, \dots\}}{I}{t} \;\transrel \label[rule]{rule:elim_exists_mexpr}\\
      & \quad\big\{\condtreeidx{\{\dots,\singlesubst{v}{\mathit{nid}}(\phi),\dots\}\cup\Phi_\mathit{orig}}{I}{\singlesubst{\id}{t'}(t)}\;\vert \nonumber\\
      & \quad\quad(\mathit{nid},t')\in\trinsert{\maketree{v, \mathit{mexpr}}}{\varassgn_t(\id)} \nonumber\\
      & \condtreeidx{\emptyset}{I}{t} \; \transrel \; \{\condtreeidx{\emptyset}{I}{t'}\;\vert\;t'\in\trees{t}\}\neq\{\condtreeidx{\emptyset}{I}{t}\} \label[rule]{rule:finish_trivial}\\
      & \condtreeidx{\Phi}{I}{t} \; \transrel \; \{\condtreeidx{\Phi}{I}{t'}\;\vert\;t'\in\trees{t}\}\neq\{\condtreeidx{\Phi}{I}{t}\} \label[rule]{rule:finish_semantic}\\
      & \quad\text{ if }\Phi\subseteq\{p(\dots)\;\vert\;p\in\psymsem\} \nonumber
    \end{alignat}
  \end{multicols}
  \vspace{-\baselineskip}
  \caption{Efficient \isla{} \gls{cdtts} Transition Relation}%
  \label{fig:isla_cdtts}
\end{figure*}

\paragraph{Transition Rules}

In the \isla{} \gls{cdtts}, we use \emph{indexed} \glspl{cdt}
\(\condtreeidx{\Phi}{I}{t}\). In the index set \(I\), we track previous matches
of universal quantifiers to make sure that we do not match the same trees over
and over. 
Since SMT formulas can now also contain variables, evaluating them can result in
a \emph{model} \(\varassgn\) (an assignment). Note that we can obtain different
assignments by repeated solver calls (negating previous solutions).
We divide the set \(\psym\) of predicate symbols into two disjoint sets
\(\psymstruct\) and \(\psymsem\) of \emph{structural} and \emph{semantic}
predicates.
Structural predicates address constraints such as \emph{before} or
\emph{within}, and they evaluate to \(\semtrue\) or \(\semfalse\).
\emph{Semantic} predicates formalize constraints such as specific checksum
implementations. They may additionally evaluate to a set of assignments, as in
the case of satisfiable SMT expressions, or to the special value ``not ready''
(denoted by \(\notready\)). Intuitively, an evaluation results in \(\semtrue\)
(\(\semfalse\)) if all of (not any of) the derivation trees represented by an
abstract tree satisfy the predicate. Assignments are returned if the given tree
can be completed or ``fixed'' to a satisfying solution. One may obtain
\(\notready\) if the constrained tree lacks sufficient information for such a
computation (e.g., the inputs of a checksum predicate are not yet determined).

We explain the individual transition rules of the \gls{cdtts} from
\cref{fig:isla_cdtts}.
\cref{rule:inv} uses a function \(\establishinvf:\fml\rightarrow\ps{\fml}\) to
enforce the invariant that all formulas \(\phi\in\Phi\) are in \gls{nnf} and do
not contain top-level conjunctions and disjunctions. Basically,
\(\establishinvf\) converts its input into \acl{dnf} and returns the disjunctive
elements.  It is only applicable to \glspl{cdt} whose constraints do not satisfy
the invariant.
\cref{rule:elim_struct_true} eliminates satisfied structural predicate formulas
from a constraint set.
Existential quantifiers over numbers are eliminated in
\cref{rule:elim_num_intro} by introducing a fresh (not occurring in the
containing \gls{cdt}) variable symbol with the special nonterminal type
\(\numtype\) for natural numbers.

\cref{rule:elim_forall,rule:elim_forall_mexpr} eliminate universal formulas that
have already been matched with all applicable subtrees, and which cannot
possibly be matched against \emph{any} extension of the (open) tree. This is the
case if the nonterminal type of the quantified variable is not reachable from
any leaf and, if there is a match expression, the current tree cannot be
completed to a matching one.

Universal formulas with and without match expressions are subject of
\cref{rule:match_forall,rule:match_forall_mexpr}. First, matching subtrees of
the tree \(\varassgn_t(\id)\) identified with \(\id\) are collected in a set
\(T\). We only consider subtrees that are not already matched, i.e., where
\((\psi,t')\) is not yet in the index set \(I\). If \(T\) is empty, the rules
are not applicable. Otherwise, the set \(\Phi\) of all instantiations of
\(\phi\) according to the discovered matches is added to the constraint set. We
record the instantiations \((\psi,t')\), for all matched trees \(t'\), in the
index set. The output of these rules is a singleton.

If universal quantifiers remain which cannot be matched or eliminated, we expand
the current tree in \cref{rule:expand}. The function \(\expand{\Phi}{t}\)
returns all possible trees \(t'\) in which each open leaf has been expanded
\emph{one step} according to the grammar. However, we only expand leaves which
are bound by a universal quantifier, that is, which represent possible subtrees
that could be unified with a universally quantified formula. For this reason, we
pass \(\Phi\) as an argument. We call the remaining, unbound grammar symbols
\emph{free nonterminals}.
For example, the XML constraint in Listing~\ref{main-lst:xml_constraint} from our paper \cite{steinhoefel.zeller-22*1}
does not restrict the instantiation of \(\bnfn{text}\) nonterminals. Thus,
\(\bnfn{text}\) is a free nonterminal which we will not expand with
\cref{rule:expand}. Instead, such nonterminals are instantiated to concrete
closed subtrees in a single step by
\cref{rule:finish_trivial,rule:finish_semantic}. In our implementation, we use a
standard coverage-based fuzzer to that end. Thus, we avoid producing many
strings which only differ, e.g., in identifier names or text passages within XML
tags.

\cref{rule:elim_smt_sat,rule:elim_sempred_sat} eliminate \emph{satisfiable} SMT
or semantic predicate formulas by querying \(\sinterpret\) or \(\pinterpret\)
(there is no transition for unsatisfiable or ``not ready'' formulas). The
transition result consists of one instantiation per returned assignment~\(\varassgn\). 

The only remaining constraints---in satisfiable constraint sets--- are
existential formulas, and semantic predicate formulas that are not yet ready to
provide a solution.
Existential formulas can be matched just like universal ones; but instead of
returning one result with all matches,
\cref{rule:match_exists,rule:match_exists_mexpr} return a \emph{set} of
solutions with one match each.

In addition to matching, we provide two rules
\cref{rule:elim_exists,rule:elim_exists_mexpr} to eliminate existential formulas
using \emph{tree insertion}. Note that, as exception to the general principle
that the rules in the \gls{cdtts} are mutually exclusive, we can apply
\cref{rule:match_exists,rule:match_exists_mexpr} \emph{and}
\cref{rule:elim_exists,rule:elim_exists_mexpr} wherever possible.
The insertion routine \(\trinsert{\mathit{newTree},t}\) guarantees that all
returned results contain all nodes from the original tree \(t\) as well as the
complete tree \(\mathit{newTree}\). Nevertheless, tree insertion is an
aggressive operation that may violate constraints that were satisfied before.
For this reason, we have to add the original constraint
\(\mathit{\Phi_\mathit{orig}}\), from which we started solving, again to the
constraint set; if the tree insertion did not violate structural constraints,
the original constraint can usually be quickly eliminated. However, tree
insertion can also entail the necessity of subsequent tree insertions, e.g., if
a new identifier was added that needs to be declared. Our implemented insertion
routine prioritizes structurally simple solutions, for which this is usually not
necessary. In the appendix, we provide details on tree insertion.

Finally, \cref{rule:finish_trivial,rule:finish_semantic} ``finish off'' the
remaining open derivation trees by replacing all open leaves with suitable
concrete subtrees. In the case of \cref{rule:finish_trivial}, this yields a
decisive result of the \gls{cdtts}. \cref{rule:finish_semantic} addresses
residual ``not ready'' semantic predicate formulas. We compute the represented
closed subtrees such that all information for evaluating the semantic predicates
is present. After this step, \cref{rule:elim_sempred_sat} must be applicable.

In the appendix, we argue for the correctness of the subsequent precision and
completeness theorems.

\begin{theorem}{(Precision)}
  The \isla{} \gls{cdtts} in \cref{fig:isla_cdtts} is globally precise.
  \label{thm:precision}
\end{theorem}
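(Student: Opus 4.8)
The plan is to deduce global precision from \cref{lem:local_implies_global_precision} by establishing \emph{local} precision, i.e.\ $\cdtsem{\mathit{cdt}'}\subseteq\cdtsem{\mathit{cdt}}$ for every transition $\mathit{cdt}\transrel\mathit{cdt}'$, and to verify this rule by rule for the relation in \cref{fig:isla_cdtts}. Two monotonicity observations about \cref{def:semantics_of_cdts} do most of the work: $\cdtsem{\cdot}$ shrinks when the constraint set is strengthened (a conjunction of more formulas is satisfied by fewer trees), and it shrinks when the tree is refined, i.e.\ $\trees{t'}\subseteq\trees{t}$ implies $\cdtsem{\condtree{\Phi}{t'}}\subseteq\cdtsem{\condtree{\Phi}{t}}$, the latter following directly from \cref{def:semantics_of_open_derivation_trees}. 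I would record these two facts first, since almost every rule reduces to one of them.

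Next I would dispatch the rules that only strengthen the constraint or refine the tree, for which local precision is immediate. \cref{rule:inv} replaces a formula by one disjunct of its disjunctive normal form, which entails it; \cref{rule:match_forall,rule:match_forall_mexpr} keep the universal formula and merely add instantiations of its body; \cref{rule:match_exists,rule:match_exists_mexpr} replace an existential by a single instantiation, which entails it; and \cref{rule:expand} refines $t$ to some $t'\in\expand{\Phi}{t}$ with $\trees{t'}\subseteq\trees{t}$. Dually, the rules that \emph{drop} a constraint do so only once it is already entailed, so they leave $\cdtsem{\cdot}$ unchanged: \cref{rule:elim_struct_true} removes a structural predicate only when $\pinterpret$ returns $\semtrue$ on all represented trees; \cref{rule:elim_forall,rule:elim_forall_mexpr} remove a universal formula only after every matching subtree is recorded in the index and its type is unreachable from every open leaf, so no completion of $t$ can add a further match; and \cref{rule:elim_num_intro} swaps an existential over numbers for a fresh constant, whose witnesses coincide with those of the existential.

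The third routine group fixes part of the tree while discharging a formula. \cref{rule:elim_smt_sat,rule:elim_sempred_sat} substitute a model $\varassgn$ obtained from $\sinterpret$ resp.\ $\pinterpret$ into $t$; the resulting $\varassgn(t)$ satisfies $\trees{\varassgn(t)}\subseteq\trees{t}$, and every closed completion of $\varassgn(t)$ satisfies the eliminated formula by construction of the model, so the output trees still satisfy the full $\Phi$. Likewise, \cref{rule:finish_trivial,rule:finish_semantic} pass from $t$ to some $t'\in\trees{t}$ without weakening $\Phi$. In all of these cases $\cdtsem{\mathit{cdt}'}\subseteq\cdtsem{\mathit{cdt}}$ again follows from the two monotonicity facts.

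The hard part is the pair of existential-elimination rules \cref{rule:elim_exists,rule:elim_exists_mexpr}, which discharge an existential by \emph{tree insertion}. In contrast to expansion, insertion restructures the tree: the result $\singlesubst{\id}{t'}(t)$ produced by $\trinsert{\cdot}{\cdot}$ embeds the nodes of $t$ into a larger context, so in general $\trees{\singlesubst{\id}{t'}(t)}\not\subseteq\trees{t}$, and local precision may genuinely fail here — the input may denote no string witnessing the existential at all, while the output denotes some. Consequently \cref{lem:local_implies_global_precision} does not apply to these rules, and this is exactly why they re-insert the original constraint $\Phi_\mathit{orig}$. I would therefore argue precision for insertion \emph{globally}, anchored at the initial \gls{cdt} $\condtree{\Phi_\mathit{orig}}{t_0}$ whose tree $t_0$ is a single root labelled with the start symbol: there $\trees{t_0}$ is the set of \emph{all} closed trees, so $\cdtsem{\condtree{\Phi_\mathit{orig}}{t_0}}$ is exactly the set of strings satisfying $\Land\Phi_\mathit{orig}$. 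Since $\Phi_\mathit{orig}$ is re-added at every insertion and can only be removed again by the (locally precise) constraint-discharging rules of the previous paragraphs, any reachable result $\condtree{\emptyset}{t}$ must have had $\Phi_\mathit{orig}$ fully discharged, whence the closed tree $t$ satisfies $\Land\Phi_\mathit{orig}$ and trivially lies in $\trees{t_0}$; thus $\treetostr{t}\in\cdtsem{\condtree{\Phi_\mathit{orig}}{t_0}}$. This global detour around the non-local insertion rules is the crux of the argument and the step I expect to require the most care.
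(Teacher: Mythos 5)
Your proposal takes the paper's route everywhere except at the crux, and at the crux it is more careful than the paper itself. Like the paper, you reduce to local precision via \cref{lem:local_implies_global_precision} and discharge the invariant, structural-predicate, matching, expansion, SMT, semantic-predicate, and finishing rules by the two monotonicity observations (strengthening \(\Phi\) or refining \(t\) shrinks \(\cdtsem{\cdot}\)); this coincides with the paper's argument almost rule for rule. The divergence is at \cref{rule:elim_exists,rule:elim_exists_mexpr}: the paper claims these are ``trivially'' locally precise \emph{because} \(\Phi_\mathit{orig}\) is re-added, whereas you argue that local precision can genuinely fail there, since tree insertion may splice new nodes above or between existing ones (e.g., by exploiting grammar recursion), so the language of the output tree is not contained in \(\trees{t}\), and no re-added constraint can undo that---the input's semantics is cut down by its \emph{tree} component, not only by its constraints. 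Your diagnosis is correct: if, say, \(t\) is closed and contains no match for the existential, then \(\cdtsem{\condtree{\Phi}{t}}=\emptyset\), yet the post-insertion \gls{cdt} can be driven to nonempty results. So the paper's local argument for these two rules does not go through as stated, and your global repair---anchoring at the initial \gls{cdt} \(\condtree{\Phi_\mathit{orig}}{t_0}\) with root-only \(t_0\), where \(\trees{t_0}\) is the set of all closed trees and hence the tree component imposes no restriction---is the argument the theorem actually needs.

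Two caveats you should make explicit. First, your repair proves precision only for derivations starting from the designated initial \glspl{cdt} (trivial tree, constraint \(\Phi_\mathit{orig}\)); \cref{def:global_precision_completeness_of_cdtts} literally quantifies over \emph{every} \gls{cdt} in the domain of \(\transrel\), and for an intermediate \gls{cdt} with a partially concretized tree the inclusion can still fail downstream of an insertion, by the same counterexample. Your proof thus establishes the intended reading of \cref{thm:precision} rather than its letter---a defect of the paper's definition rather than of your argument, but it should be stated. Second, the step ``\(\Phi_\mathit{orig}\) can only be removed by locally precise rules, hence the result satisfies \(\Land\Phi_\mathit{orig}\)'' needs the following formalization: take the \emph{last} insertion step in the transition chain; all later steps are locally precise, so by the chaining argument in the proof of \cref{lem:local_implies_global_precision} the result's string lies in the semantics of the post-insertion \gls{cdt}, whose constraint set contains \(\Phi_\mathit{orig}\); if the chain contains no insertion at all, \cref{lem:local_implies_global_precision} applies directly. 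With those two points spelled out, your argument is sound.
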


\begin{proof}[Proof Sketch.]
  By \cref{lem:local_implies_global_precision}, we prove global precision by
  showing that each individual transition rule is \emph{locally} precise, i.e.,
  that the produced states do not represent derivation trees that were not
  originally in the semantics of the inputs \gls{cdt}.

  \cref{rule:inv} is precise since conversion to \acl{dnf} is
  equivalence-preserving. Elimination of structural predicates
  (\cref{rule:elim_struct_true}) is trivially precise (removing a true element
  from a conjunction does not change the semantics).

  \cref{rule:elim_forall,rule:elim_forall_mexpr} are precise because a universal
  quantifier that does not match any tree element evaluates to true according to
  Definition~\ref{main-def:isla_validation} in our paper~\cite{steinhoefel.zeller-22*1}, and we only remove it if we
  can be sure that no possible extension of an open tree will ever match the
  quantifier. If a match is already in the index set, we can be sure that it
  already has been considered due to the definition of
  \cref{rule:match_forall,rule:match_forall_mexpr}, which are the only rules
  ever adding to that set.

  \cref{rule:match_forall,rule:match_forall_mexpr} are precise because we only
  \emph{add} the matching instantiations of the inner formula to the
  (conjunctive) constraint set. 

  Tree expansion (\cref{rule:expand}) is precise since by considering \emph{more
  concrete} trees, the set of concrete trees represented by the input \gls{cdt}
  is only ever \emph{decreased} in the outputs (cf.\
  \cref{def:semantics_of_open_derivation_trees}).


  The elimination of SMT formulas (\cref{rule:elim_smt_sat}) is precise since
  their semantics is defined via the interpretation function \(\sinterpret\),
  which we query to produce valid output states. The same holds for
  \cref{rule:elim_sempred_sat} for semantic predicates.

  Existential quantifier matching
  (\cref{rule:match_exists,rule:match_exists_mexpr}) is precise since it
  conforms to Definition~\ref{main-def:isla_validation} in our paper~\cite{steinhoefel.zeller-22*1} inasmuch it creates
  one instantiated \gls{cdt} for each match in the input \gls{cdt}. The original
  formula is removed from these results, but the instantiation retained.

  The tree insertion rules (\cref{rule:elim_exists,rule:elim_exists_mexpr}) (the
  most complicated ones in our system due to the complexity of tree tree
  insertion itself) are trivially to prove, because we add the additional
  constraint again to the constraint set.

  Finally, \cref{rule:finish_trivial,rule:finish_semantic} consider more
  concrete trees and are therefore precise for the same reasons as
  \cref{rule:expand}.
\end{proof}

\begin{theorem}{(Completeness)}
  The \isla{} \gls{cdtts} in \cref{fig:isla_cdtts} is globally complete.
  \label{thm:completeness}
\end{theorem}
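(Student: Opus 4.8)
The plan is to follow the same high-level route as the precision proof, but completeness does not reduce to a single local property (as the preceding discussion notes). Beyond showing that no step \emph{loses} solutions, I must also establish a \emph{progress} property---every \gls{cdt} with non-empty semantics and non-empty constraint set admits an applicable rule, so no solution gets stuck---and a \emph{termination} property---for every solution there exists a \emph{finite} derivation reaching a state $\condtree{\emptyset}{t}$. (Note that only \emph{existence} of a terminating derivation per solution is needed, not that every derivation terminates.) Accordingly, I would fix an arbitrary $w\in\cdtsem{\condtree{\Phi}{t}}$ together with a witnessing closed tree $t_w\in\trees{t}$ with $\treetostr{t_w}=w$ and $\bislam{\varassgn_{t_w}}{\Land\Phi}$, which exists by \cref{def:semantics_of_cdts}, and construct a concrete derivation ending in $\condtree{\emptyset}{t_w'}$ with $\treetostr{t_w'}=w$. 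The construction is \emph{guided by} $t_w$: at each step I choose a successor whose semantics still contains $w$.

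First I would establish \emph{local completeness} of every rule, i.e.\ $\cdtsem{\mathit{cdt}}\subseteq\bigcup_{\mathit{cdt}\transrel\mathit{cdt'}}\cdtsem{\mathit{cdt'}}$ (\cref{def:precision_completeness_of_cdtts}), going rule by rule as in the precision argument but with the inclusion reversed. Most cases mirror precision: \cref{rule:inv} preserves semantics because \acs{dnf} conversion is an equivalence; \cref{rule:elim_struct_true,rule:elim_forall,rule:elim_forall_mexpr} remove only atoms evaluating to true; and \cref{rule:match_forall,rule:match_forall_mexpr} replace a universal by the equivalent conjunction of its instantiations while retaining $\psi$ for future matches. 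The genuinely completeness-flavoured cases are: \cref{rule:elim_smt_sat,rule:elim_sempred_sat}, which are complete precisely because the abstract interpretations $\sinterpret$ and $\pinterpret$ return the \emph{full} set of satisfying assignments, so every closed solution corresponds to some returned model; \cref{rule:expand}, where I must argue that every closed tree in $\trees{t}$ factors through a one-step expansion of some bound leaf, so $\expand{\Phi}{t}$ collectively covers $\cdtsem{\condtree{\Phi}{t}}$; the existential rules \cref{rule:match_exists,rule:match_exists_mexpr} \emph{together with} the insertion rules \cref{rule:elim_exists,rule:elim_exists_mexpr}, which jointly cover every solution (an existential witness is either an already-present matching subtree, handled by matching, or one that must be grafted in, handled by insertion); and \cref{rule:finish_trivial,rule:finish_semantic}, which are complete because $\trees{t}$ enumerates \emph{all} closed completions of the free nonterminals.

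Second I would prove \emph{progress} by a case analysis on the shape of the \acs{nnf}, conjunction/disjunction-free formulas in $\Phi$. If $\Phi=\emptyset$ while $t$ is open, \cref{rule:finish_trivial} applies. Otherwise every present atom admits a rule: a structural, SMT, or semantic atom that were \emph{un}satisfiable would force $\cdtsem{\condtree{\Phi}{t}}=\emptyset$, contradicting $w\in\cdtsem{\condtree{\Phi}{t}}$, so the relevant elimination rule (\cref{rule:elim_struct_true,rule:elim_smt_sat,rule:elim_sempred_sat}, or \cref{rule:finish_semantic} for a not-ready predicate) is enabled; a universal is always matchable, eliminable, or triggers \cref{rule:expand}; an existential is handled by matching or insertion; and a numeric existential by \cref{rule:elim_num_intro}. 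Combined with local completeness, this guarantees a next step along the $w$-guided path that keeps $w$ in the semantics.

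Finally, and this is the crux, I would show the guided path \emph{terminates}. The obstacle is that \cref{rule:expand} and especially \cref{rule:elim_exists,rule:elim_exists_mexpr} \emph{grow} the tree, so no measure based on tree size alone works. The fix is to measure progress \emph{relative to the fixed target} $t_w$, using a well-founded, lexicographically ordered tuple: (i) the number of nodes of $t_w$ not yet realized in the current tree; (ii) the number of $t_w$-guided quantifier matches not yet recorded in the index set $I$; and (iii) the number of remaining atoms. Each guided step strictly decreases this tuple: expansion and insertion commit nodes belonging to $t_w$, decreasing (i); matching and quantifier elimination decrease (ii) (so the atom growth caused by \cref{rule:match_forall} is absorbed by the lexicographic ordering); the base eliminations decrease (iii); and the finishing rules close all remaining free leaves in one step. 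The \textbf{main difficulty} is bounding insertion \emph{cascades}: grafting in one subtree can itself necessitate further insertions (e.g.\ a freshly introduced identifier that must be declared, re-triggered via $\Phi_\mathit{orig}$). I expect to discharge this by appealing to the completeness guarantees of the routine $\trinsert{\cdot}{\cdot}$ from the appendix, together with the observation that, since the path is guided by the \emph{fixed, finite} $t_w$ (which is itself a valid solution and hence already contains all such required declarations), every inserted node corresponds to a node of $t_w$; thus component (i) is genuinely lower-bounded and strictly decreasing and the cascade is finite. Collecting the three parts yields a finite derivation $\condtree{\Phi}{t}\transrel\cdots\transrel\condtree{\emptyset}{t_w'}$ with $\treetostr{t_w'}=w$, so $w\in\{\treetostr{t}\mid t\in R_{\mathit{cdt}}\}$, which is exactly global completeness (\cref{def:global_precision_completeness_of_cdtts}).
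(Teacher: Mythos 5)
Your proposal is correct and follows the paper's own three-part decomposition: local completeness of each rule, progress (an applicable rule for every valid \gls{cdt}), and a termination argument showing each derivation can reach an empty constraint set. The rule-by-rule local-completeness and progress cases match the paper's sketch almost exactly. Where you genuinely diverge is the termination argument. The paper uses a measure on the \emph{constraint set alone}---its size together with the nesting depth of contained quantifiers---and handles tree insertion by observing that some insertions subsequently allow the inserted existential to be discharged by \emph{matching}, so re-insertion need not recur; it does not really explain why chains of \cref{rule:expand} terminate, since expansion leaves the constraint set untouched. You instead fix a witness \(w\) with closed tree \(t_w\) and build a \(w\)-guided derivation, measured by a lexicographic tuple anchored to \(t_w\) (unrealized nodes of \(t_w\), unrecorded guided matches, remaining atoms). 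This buys you a uniform and more rigorous treatment of exactly the two problematic rule families: guided expansion and guided insertion both strictly decrease the first component because every node they commit belongs to the fixed, finite \(t_w\), which also bounds insertion cascades (the declarations any inserted identifier needs are already present in \(t_w\)). It also yields directly the per-solution finite derivation that \cref{def:global_precision_completeness_of_cdtts} demands, whereas the paper's conclusion step (local completeness plus a decreasing measure implies global completeness) leaves implicit the need to trace a single \(w\) through successive successors. The paper's route is more local and avoids fixing a witness, but as a proof sketch yours closes gaps the paper leaves open; the trade-off is that your measure requires the guided-choice invariant (the successor's semantics still contains \(w\)) to be verified per rule, which is essentially your local-completeness case analysis restated pointwise.
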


\begin{proof}[Proof Sketch.]
  To prove the global completeness of our system, we have to show that the
  semantics of each input \gls{cdt} is contained in the semantics of all
  reachable \glspl{cdt} with empty constraint set.
  We reduce this problem as follows. First, we show \emph{local} completeness,
  i.e., that no information is lost by applying any transition rule of our
  \gls{cdtts}.  Second, we argue that for each \emph{valid} \gls{cdt}, there is
  an applicable rule in the \gls{cdtts}. Third, we argue that for each input
  \gls{cdt}, there is \emph{one} output \gls{cdt} which is \emph{closer} to a
  state with empty constraint set in the \gls{cdtts} than the inputs.
  From this, we conclude global completeness as follows: Since for each valid
  state, there is a transition step from which get closer to an output state
  with empty constraint set, this also holds for each valid state produced by
  this step. By additionally requiring that the individual steps do not lose
  information, we conclude global completeness.

  We argue for the local completeness of a chosen set of \gls{cdtts} rules.

  The expansion and finishing rules are locally complete if \emph{all} expansions
  are considered. This is the case in our \gls{cdtts}, although our actual
  implementation can only ever consider a finite set of solutions.

  The same holds for solving SMT formulas. Note that if we only consider a
  finite solution set as in our prototype, it is crucial that \emph{there remain
  no universal quantifiers} in the constraint set. Otherwise, we could obtain
  instantiations that conflict with formulas obtained from later quantifier
  instantiation. This is not a problem in the theoretic framework, though, since
  there we consider \emph{all possible} solutions, of which at least some will
  not conflict with atoms nested in remaining universal quantifiers.

  Tree insertion, which is easy to show precise, is more problematic to show
  locally complete, since we add the original constraint set. However, since we
  consider \emph{all} possible insertions, there have to be some satisfying that
  constraint, since the input \gls{cdt} is valid.

  Since we defined one rule for each syntactic construct in \isla{}, there is
  one rule for each valid input state. \cref{rule:expand}, for example, only
  expands nonterminals with potential concrete subtrees matching existing
  existential quantifiers; for all other nonterminals, the finishing
  rules will be applicable. 

  The general measure to show that each transition produces a state that is
  closer to an empty constraint set is the size of the constraint set together
  with the nesting depth of contained quantifiers. If either of these measures
  decreases in each step, we eventually reach an empty constraint set.
  That we get closer to an empty constraint set is clear to see for all
  elimination rules. In case of the matching rules, we reduce the complexity of
  the constraint set by peeling off the outer quantifier. Again, tree insertion
  is most problematic: It peels of the existential quantifier, but adds the
  original constraint set. Here, it is important to see that there are some
  insertions for which we can remove the existential constraint we eliminated by
  tree insertion by \emph{matching}, and that we thus do not have to keep
  re-inserting.
\end{proof}

We explain the main building blocks used in our \isla{} \gls{cdtts}
(\cref{fig:isla_cdtts}) in more detail. 

\paragraph{Tracking Instantiations}%
\label{par:tracking_instantiations}

Our \gls{cdtts} stepwise expands open trees and checks if existing universal
quantifiers match the expanded tree. Expansion does not eliminate a universal
quantifier, since it might apply to not yet generated subtrees. To avoid
endlessly instantiating universal quantifiers with the \emph{same} trees, we
track already performed universal quantifier instantiations. To that end, we
augment \glspl{cdt} with an \emph{index set} \(I\) consisting of pairs of
universal formulas and trees with which they already have been unified; we write
\(\condtreeidx{\Phi}{I}{t}\) for the enhanced structures.

\paragraph{Invariant}%
\label{par:invariant}

We maintain the invariant that all formulas \(\phi\in\Phi\) in \glspl{cdtts}
\(\condtree{\Phi}{t}\) are in \gls{nnf}, i.e., negations only occur directly
before predicate formulas and within SMT expressions, and are free of
conjunctions and disjunctions (on top level; they are allowed inside of
quantifiers and within SMT formulas). The function
\(\establishinvf:\fml\rightarrow\ps{\fml}\) first converts its argument into
\gls{nnf} by pushing negations inside (e.g.,
\(\islanot{\islaexists[\mathit{type}]{v}{w}{\phi}}\) gets
\(\islaforall[\mathit{type}]{v}{w}{\islanot{\phi}}\), and, for \(\psi\in\boolterms{V}\),
\(\islanot{\psi}\) gets \(\cd{(not $\psi$)}\in\boolterms{V}\)). Then, it
converts the result into \acl{dnf} by applying distributivity laws, which yields
a \emph{set} of disjunction-free formulas in \gls{nnf}. Finally, it splits all
top-level conjunctions outside SMT expressions in the result set into multiple
formulas.

\paragraph{SMT Models}%
\label{par:smt_models}

In \cref{fig:isla_cdtts}, we apply the interpretation \(\sinterpret\) for SMT
expressions to Boolean terms \(\boolterms{V}\) with a non-empty variable set
\(V\), i.e., the evaluated expressions may contain uninterpreted String
constants. In this case, the SMT solver will \emph{either} return \(\semfalse\)
in case of an unsatisfiable constraint (or time out, which we interpret as
\(\semfalse\)), \emph{or} an assignment \(\varassgn\) (a \emph{model}).
Since we can call the solver repeatedly and ask for different solutions (by
adding the negated previous solutions as assumptions), we assume that we get a
\emph{set} of assignments of tree identifiers to new subtrees from
\(\sinterpret\).

\paragraph{Semantic Predicates}%
\label{par:semantic_predicates}

We divide the set \(\psym\) of predicate symbols into two disjoint sets
\(\psymstruct\) and \(\psymsem\) of \emph{structural} and \emph{semantic}
predicates.
Structural predicates address structural constraints, such as \emph{before} or
\emph{within}. They evaluate to \(\semtrue\) or \(\semfalse\). \emph{Semantic}
predicates formalize more complex constraints, such as specific checksum
implementations. In addition to \(\semtrue\) or \(\semfalse\), semantic
predicate formulas may evaluate to a set of assignments, as in the case of
satisfiable SMT expressions, or to the special value ``not ready'' (denoted by
\(\notready\)).  Intuitively, an evaluation results in \(\semtrue\)
(\(\semfalse\)) if all of (not any of) the concrete derivation trees represented
by an abstract tree satisfy the predicate.  A set of assignments is returned if
there are reasonable ``fixes'' of the tree (e.g., all elements relevant for a
checksum computation are determined, such that the checksum can be computed by
the predicate). One may obtain \(\notready\) if the constrained tree lacks
sufficient information for such a computation; for instance, we cannot compute a
checksum if the summarized fields are still abstract.

In contrast to all other constraint types, the \emph{order of semantic predicate
formulas within a conjunction matters} (we use ordered sets in the
implementation of our \glspl{cdt}). The reason is that each semantic predicate
comes with its own, atomic solver. Consider, for example, a binary format which
requires a semantic predicate for the computation of a data field (e.g.,
requiring a specific compression algorithm) and another one for a checksum which
also includes the data field. Then, one must \emph{first} compute the value of
the data field, and \emph{then} the value of the checksum. Changing this order
would result in an invalid checksum. Since SMT formulas are composable, we
recommend using semantic predicates only if the necessary computation can either
not be expressed in SMT-LIB, or the solver frequently times out when searching
for solutions.

\paragraph{Tree Insertion}%
\label{par:tree_insertion}

Existential constraints can occasionally be solved by matching them against the
indicated subtree, similarly to universal quantifiers. In general, though, we
have to manipulate the tree to enforce the existence of the formalized
structure. If a successful match is not possible, we therefore constructively
\emph{insert} a new tree into the existing one.
The function \(\maketree{v}\) creates a new derivation tree consisting of a
single root node of type \(\vartype{v}\). When passing it a match expression
\(\mathit{mexpr}\) as additional argument, it creates a minimal open tree rooted
in a node of type \(\nlabel{v}\) and matching \(\mathit{mexpr}\).
The function \(\trinsert{t'}{t}\) tries to inserts the tree \(t'\) into \(t\).
Whether this is possible entirely depends on \(t\), \(t'\) and the grammar. In
the simplest case, \(t\) has an open leaf from which the nonterminal
\(\nlabel{t'}\) is reachable. Then, we create a suitable tree connecting the
leaf and the root of \(t'\) and glue these components together.

If this is not possible, we attempt to exploit recursive definitions in the
grammar. Consider, for example, a partial XML document according to the grammar
in Figure~\ref{main-fig:xml-bnf} in our paper~\cite{steinhoefel.zeller-22*1} and the constraint
\(\islaexists[\bnfn{xml\text{-}open\text{-}tag}]{\cd{optag}}{\cd{tree}}{\cd{(=
optag "<a>")}}\), where and \(\cd{tree}\) points to a node with root of type
\(\bnfn{xml\text{-}tree}\).  If there is some opening tag of form \cd{<a>} in
\(\mathit{tree}\), we can eliminate the constraint. Otherwise, we observe that
the nonterminal \(\bnfn{xml\text{-}tree}\) is \emph{reachable from itself} in
the grammar graph.
Thus, we can replace an existing \(\bnfn{xml\text{-}tree}\) node in
\(\mathit{tree}\) by a number of possible alternatives, comprising \(\cd{<a>}
\bnfs \bnfn{xml\text{-}tree} \allowbreak \bnfn{xml\text{-}close\text{-}tag}\),
which allows to insert both the already existing \(\bnfn{xml\text{-}tree}\) and
the new opening tag \cd{<a>} into the expanded result.  

\paragraph{Cost Function}

The choice of the right cost function is crucial for the performance of the
solver, both in terms of generation speed (number of outputs per time) and
output diversity (e.g., creation of deep nestings in the case of XML, or
coverage of combinations of language constructs in the case of C).

Our cost function computes the weighted geometric mean of \emph{cost factors}
\(\mathit{cf}_i\) and corresponding \emph{weights} \(w_i\) as
\[\mathit{cost}=\left(\prod_{i=1,\dots,n}^{w_i\neq0}(\mathit{cf}_i+1)^{w_i}\right)^{\left(\sum_{i=1,\dots,n}^{w_i\neq0}w_i\right)^{-1}}-1\]

We filter out pairs of cost factors and weights where the weight is 0; in this
case, the corresponding cost factor is deactivated. Furthermore, we avoid the
case that the final cost value is 0 if one of the factors is 0 by incrementing
each factor by 1, and finally decrementing the result by 1 again.

We chose the following cost factors:

\begin{description}[leftmargin=5mm]
  \item[Tree closing cost.] We precompute, for each nonterminal in the grammar,
    an approximation of the instantiation effort of that nonterminal, roughly by
    instantiating it several times randomly with a fuzzer, and then summing up
    the sizes of the possible grammar expansion alternatives in the resulting
    tree. The closing cost for a derivation tree is defined as the sum of the
    costs of each nonterminal symbol in all open leaves of the tree.
  \item[Constraint cost.] Certain constraints are more expensive to solve than
    others. In particular, solving existential quantifiers by tree insertion is
    computationally costly. This cost factor assigns higher cost for constraints
    with existential and deeply nested quantifiers.
  %
  %
  \item[Derivation depth penalty.] As the solver's queue fills up, it becomes
    more improbable for individual queue elements to be selected next. If we
    assign a cost to the derivation depth, it becomes more likely that the
    solver eventually comes back to partial solutions discovered earlier,
    avoiding starvation of such inputs.
  \item[k-path coverage.] When choosing between different partial trees, we
    generally want to generate those exercising more language features at once.
    The k-path coverage metric~\cite{havrikov.zeller-19} computes all paths of
    length k in a grammar and derivation tree; the proportion of such paths
    covered by a tree is then the coverage value. We penalize trees which cover
    only few k-paths. The concrete value of k is configurable; the default is 3.
  \item[Global k-path coverage.] For each final result produced by the solver, we
    record the covered k-paths and from then on prefer solutions covering
    \emph{additional} language features. Once all k-paths in a grammar have been
    covered, we erase the record.
\end{description}

The influence of these cost factors can be controlled by passing a tuple of
weights to the solver. We provide a reasonable default vector
(\((11,3,5,20,10)\)), but in certain cases, a problem-specific tuning might be
necessary to improve the performance. Our implementation provides an
evolutionary parameter tuning mechanism, which runs the solver with randomly
chosen weights, and then computes several generations of weight vectors using
crossover and mutation. The fitness value of a weight vector is determined by
the generation speed, a vacuity estimator, and a k-path-based coverage measure.

\setcounter{section}{\getrefnumber{main-sec:evaluation}}
\addtocounter{section}{-1}
\DeclareRobustCommand{\sectitle}{\nameref*{main-sec:evaluation}}
\section{\sectitle}

\StrBehind{\getrefnumber{main-sub:rq3_islearn}}{.}[\subseccnt]
\setcounter{subsection}{\subseccnt}
\addtocounter{subsection}{-1}
\DeclareRobustCommand{\sectitle}{\nameref*{main-sub:rq3_islearn}}
\subsection{\nameref*{main-sub:rq3_islearn}}

In the subsequent
\crefrange{lst:islearn_constraint_dot}{lst:islearn_constraint_icmp_echo_checksum},
we list the constraints that \islearn{} mined in our case study for our third
research question.

\begin{lstlisting}[%
  language={isla},%
  % float=tb,%
  caption={Constraint mined by \islearn{} for DOT},%
  label=lst:islearn_constraint_dot,%
  basicstyle=\small\ttfamily,%
]
((forall <graph_type> container in start:
    exists <DIGRAPH> elem in container:
      (= elem "digraph") or
 forall <edge_stmt> container_0 in start:
   exists <edgeop> elem_0 in container_0:
     (= elem_0 "--")) and
(forall <graph> container_1 in start:
   exists <GRAPH> elem_1 in container_1:
     (= elem_1 "graph") or
forall <edge_stmt> container_2 in start:
  exists <edgeop> elem_2 in container_2:
    (= elem_2 "->")))
\end{lstlisting}
 
\begin{lstlisting}[%
  language={isla},%
  % float=tb,%
  caption={Constraint mined by \islearn{} for Racket based on the XML
  \emph{def-use} pattern for prefixes in attributes},%
  label=lst:islearn_constraint_racket,%
  basicstyle=\small\ttfamily,%
]
(forall <expr> attribute=
    "<maybe_comments><MWSS>{<name> prefix_use}" in start:
  ((= prefix_use "sqrt") or
   (= prefix_use "string-append") or
   $\dots$ or
  exists <definition> outer_tag="(<MWSS>define<MWSS>(<MWSS><name>{<WSS_NAMES> cont_attribute}<MWSS>)<MWSS><expr><MWSS>)" 
      in start:
    (inside(attribute, outer_tag) and
    exists <NAME> def_attribute="{<NAME_CHARS> prefix_def}" in cont_attribute:
      (= prefix_use prefix_def))))
\end{lstlisting}
 
\begin{lstlisting}[%
  language={isla},%
  % float=tb,%
  caption={Racket constraint based on the XML \emph{def-use} pattern in addition
  to an extended reST \emph{def-use} pattern},%
  label=lst:islearn_constraint_racket_with_rest,%
  basicstyle=\small\ttfamily,%
]
(forall <expr> use_ctx="<maybe_comments><MWSS>(<MWSS>{<name> use}<wss_exprs><MWSS>)" in start:
  ((= use "sqrt") or
   (= use "string-append") or
   $\dots$ or
    exists <definition> def_ctx=
        "(<MWSS>define<MWSS>(<MWSS>{<name> def}<WSS_NAMES><MWSS>)<MWSS><expr><MWSS>)" in start:
      ((before(def_ctx, use_ctx) and
      (= use def))))) and
(forall <expr> attribute=
    "<maybe_comments><MWSS>{<name> prefix_use}" in start:
  ((= prefix_use "sqrt") or
   (= prefix_use "string-append") or
   $\dots$ or
  exists <definition> outer_tag="(<MWSS>define<MWSS>(<MWSS><name>{<WSS_NAMES> cont_attribute}<MWSS>)<MWSS><expr><MWSS>)" in start:
    (inside(attribute, outer_tag) and
    exists <NAME> def_attribute="{<NAME_CHARS> prefix_def}" in cont_attribute:
      (= prefix_use prefix_def))))
\end{lstlisting}
  
\begin{lstlisting}[%
  language={isla},%
  % float=tb,%
  caption={\islearn{} constraint for ICMP Echo type fields},%
  label=lst:islearn_constraint_icmp_echo,%
  basicstyle=\small\ttfamily,%
]
(forall <icmp_message> container in start:
   exists <type> elem in container:
     (= elem "00 ") or
forall <icmp_message> container_0 in start:
  exists <type> elem_0 in container_0:
    (= elem_0 "08 ")))
\end{lstlisting}

\begin{lstlisting}[%
  language={isla},%
  % float=tb,%
  caption={Constraint learned by \islearn{} for ICMP Echo after adding a
  predicate for Internet Checksums},%
  label=lst:islearn_constraint_icmp_echo_checksum,%
  basicstyle=\small\ttfamily,%
]
((forall <icmp_message> container in start:
    exists <type> elem in container:
      (= elem "00 ") or
 forall <icmp_message> container_0 in start:
   exists <type> elem_0 in container_0:
     (= elem_0 "08 ")))) and
forall <icmp_message> container in start:
  exists <checksum> checksum in container:
    internet_checksum(container, checksum)
\end{lstlisting}

\balance
\bibliography{bibliography}


\begin{thebibliography}{4}


\ifx \showCODEN    \undefined \def \showCODEN     #1{\unskip}     \fi
\ifx \showDOI      \undefined \def \showDOI       #1{#1}\fi
\ifx \showISBNx    \undefined \def \showISBNx     #1{\unskip}     \fi
\ifx \showISBNxiii \undefined \def \showISBNxiii  #1{\unskip}     \fi
\ifx \showISSN     \undefined \def \showISSN      #1{\unskip}     \fi
\ifx \showLCCN     \undefined \def \showLCCN      #1{\unskip}     \fi
\ifx \shownote     \undefined \def \shownote      #1{#1}          \fi
\ifx \showarticletitle \undefined \def \showarticletitle #1{#1}   \fi
\ifx \showURL      \undefined \def \showURL       {\relax}        \fi
\providecommand\bibfield[2]{#2}
\providecommand\bibinfo[2]{#2}
\providecommand\natexlab[1]{#1}
\providecommand\showeprint[2][]{arXiv:#2}

\bibitem[Havrikov and Zeller(2019)]%
        {havrikov.zeller-19}
\bibfield{author}{\bibinfo{person}{Nikolas Havrikov} {and}
  \bibinfo{person}{Andreas Zeller}.} \bibinfo{year}{2019}\natexlab{}.
\newblock \showarticletitle{{Systematically Covering Input Structure}}. In
  \bibinfo{booktitle}{\emph{34th {IEEE/ACM} International Conference on
  Automated Software Engineering (ASE)}}. \bibinfo{publisher}{{IEEE}},
  \bibinfo{pages}{189--199}.
\newblock
\urldef\tempurl%
\url{https://doi.org/10.1109/ASE.2019.00027}
\showDOI{\tempurl}


\bibitem[Steinh{\"o}fel(2022)]%
        {steinhoefel-22}
\bibfield{author}{\bibinfo{person}{Dominic Steinh{\"o}fel}.}
  \bibinfo{year}{2022}\natexlab{}.
\newblock \bibinfo{title}{{The ISLa Language Specification}}.
\newblock
\newblock
\urldef\tempurl%
\url{https://rindphi.github.io/isla/islaspec/}
\showURL{%
\tempurl}
\newblock
\shownote{Accessed: 2022-08-23}.


\bibitem[\textbf{Dominic} \textbf{Steinh{\"{o}}fel} and Zeller(2022)]%
        {steinhoefel.zeller-22*1}
\bibfield{author}{\bibinfo{person}{\textbf{Dominic} \textbf{Steinh{\"{o}}fel}}
  {and} \bibinfo{person}{Andreas Zeller}.} \bibinfo{year}{2022}\natexlab{}.
\newblock \showarticletitle{{Input Invariants}}. In
  \bibinfo{booktitle}{\emph{Proceedings of the 30th {ACM} Joint European
  Software Engineering Conference and Symposium on the Foundations of Software
  Engineering ({ESEC/FSE} '22), November 14--18, 2022, Singapore}}.
  \bibinfo{publisher}{{ACM}}.
\newblock
\newblock
\shownote{To appear.}.


\bibitem[van Dalen(1994)]%
        {dalen-94}
\bibfield{author}{\bibinfo{person}{Dirk van Dalen}.}
  \bibinfo{year}{1994}\natexlab{}.
\newblock \bibinfo{booktitle}{\emph{{Logic and Structure (3rd ed.)}}}.
\newblock \bibinfo{publisher}{Springer}.
\newblock
\showISBNx{978-3-540-57839-0}


\end{thebibliography}
\clearpage

\end{document}